\documentclass[11pt]{article}
\usepackage{amsfonts,amssymb,a4,bm}
\usepackage[all]{xy}
\usepackage{amsmath}
\usepackage{color}

\newcommand{\qed}{\hfill\rule{3mm}{3mm}}
\topmargin 0cm \textheight 22.5cm \textwidth 16cm \oddsidemargin
0.5cm
\newtheorem{teorema}{Theorem}
\newtheorem{defi}{Definition}
\newtheorem{cor}{Corollary}
\newtheorem{lem}{Lemma}

\newtheorem{pro}{Proposition}
\makeatletter \@addtoreset{equation}{section} \makeatother
\begin{document}


\voffset=-1.5truecm\hsize=16.5truecm    \vsize=24.truecm
\baselineskip=14pt plus0.1pt minus0.1pt \parindent=12pt
\lineskip=4pt\lineskiplimit=0.1pt      \parskip=0.1pt plus1pt

\def\ds{\displaystyle}\def\st{\scriptstyle}\def\sst{\scriptscriptstyle}


\let\a=\alpha \let\b=\beta \let\ch=\chi \let\d=\delta \let\e=\varepsilon
\let\f=\varphi \let\g=\gamma \let\h=\eta    \let\k=\kappa \let\l=\lambda
\let\m=\mu \let\n=\nu \let\o=\omega    \let\p=\pi \let\ph=\varphi
\let\r=\rho \let\s=\sigma \let\t=\tau \let\th=\vartheta
\let\y=\upsilon \let\x=\xi \let\z=\zeta
\let\D=\Delta \let\F=\Phi \let\G=\Gamma \let\L=\Lmbda \let\Th=\Theta
\let\O=\Omega \let\P=\Pi \let\Ps=\Psi \let\Si=\Sigma \let\X=\Xi
\let\Y=\Upsilon\let\L\Lambda



\def\\{\noindent}
\let\io=\infty

\def\VU{{\mathbb{V}}}
\def\ED{{\mathbb{E}}}
\def\GI{{\mathbb{G}}}
\def\Tt{{\mathbb{T}}}
\def\C{\mathbb{C}}
\def\LL{{\cal L}}
\def\RR{{\cal R}}
\def\SS{{\cal S}}
\def\NN{{\cal M}}
\def\MM{{\cal M}}
\def\HH{{\cal H}}
\def\GG{{\cal G}}
\def\PP{{\cal P}}
\def\AA{{\cal A}}
\def\BB{{\cal B}}
\def\FF{{\cal F}}
\def\TT{{\cal T}}
\def\v{\vskip.1cm}
\def\vv{\vskip.2cm}
\def\gt{{\tilde\g}}
\def\E{{\mathcal E} }
\def\I{{\rm I}}
\def\0{\emptyset}
\def\xx{{\V x}} \def\yy{{\bf y}} \def\kk{{\bf k}} \def\zz{{\bf z}}
\def\ba{\begin{array}}
\def\ea{\end{array}}  \def \eea {\end {eqnarray}}\def \bea {\begin {eqnarray}}
\def\xto#1{\xrightarrow{#1}}

\def\tende#1{\vtop{\ialign{##\crcr\rightarrowfill\crcr
              \noalign{\kern-1pt\nointerlineskip}
              \hskip3.pt${\scriptstyle #1}$\hskip3.pt\crcr}}}
\def\otto{{\kern-1.truept\leftarrow\kern-5.truept\to\kern-1.truept}}
\def\arm{{}}
\font\bigfnt=cmbx10 scaled\magstep1

\newcommand{\card}[1]{\left|#1\right|}
\newcommand{\und}[1]{\underline{#1}}
\def\1{\rlap{\mbox{\small\rm 1}}\kern.15em 1}
\def\ind#1{\1_{\{#1\}}}
\def\bydef{:=}
\def\defby{=:}
\def\buildd#1#2{\mathrel{\mathop{\kern 0pt#1}\limits_{#2}}}
\def\card#1{\left|#1\right|}
\def\proof{\noindent{\bf Proof. }}
\def\qed{ \square}
\def\reff#1{(\ref{#1})}
\def\eee{{\rm e}}
\def\be{\begin{equation}}
\def\ee{\end{equation}}

\title{\Large Convergence of  Mayer  and Virial expansions
and the  Penrose tree-graph identity}

\author{\normalsize Aldo Procacci\footnote{\scriptsize Departamento de Matem{\'a}tica, Universidade Federal de Minas Gerais, Belo Horizonte-MG, Brazil - aldo@mat.ufmg.br},
Sergio A. Yuhjtman\footnote{\scriptsize Departamento de Matem\'atica, Universidad de Buenos Aires, Buenos Aires, Argentina - sergioyuhjtman@gmail.com}}

\maketitle

\begin{abstract} We  establish  new lower bounds for the convergence radius of the Mayer series and the Virial series
of  a  continuous particle system interacting via a stable and
tempered pair potential. Our  bounds considerably improve those
given by Penrose and Ruelle in 1963 for the Mayer series and  by
Lebowitz and Penrose in 1964 for the Virial series. To get our
results we exploit the tree-graph  identity given by Penrose in 1967
using a new partition scheme based on  minimum spanning trees.
\end{abstract}

\vskip.3cm
{\footnotesize
\\{\bf Keywords}: Classical continuous gas, Mayer series, Tree-graph identities.

\vskip.2cm
\\{\bf MSC numbers}:  82B05, 82B21, 05C30.
}
\vskip.5cm

\section{Introduction} \let\thefootnote\relax\footnotetext{2010 {\it Mathematics Subject Classification.} Primary 82B21; Secondary 05C05.}
Continuous particle systems are  an important subject of
investigation in rigorous statistical mechanics.
Since the thirties it was clear that the equation of state
of a non ideal gas in thermodynamics
can be deduced from statistical mechanics via the Mayer series
of the pressure in the grand canonical ensemble.
The Mayer series, proposed by Mayer (see, e.g.,  \cite{MM} and references therein) is  an expansion in powers of the fugacity $\l$ of the
logarithm of the Grand Canonical partition function 
of the the system under study,
whose $n^{\rm th}$-order coefficient is  given by a sum of terms indexed by the labeled connected graphs between $n$ vertices.
This series was at the time only formal and the question about  its  convergence
uniformly in the volume where the gas is confined, remained an enigma during two decades.
The difficulty  in dealing with this problem was basically due to challenging  combinatorial structure
of the Mayer coefficients.

\\In the sixties, Penrose \cite{Pe63} and independently Ruelle \cite{Ru63}  proved the convergence of the Mayer  series
for a very large class of continuous particle systems. Namely those
interacting via a stable and tempered pair potential. The bound on
the  convergence radius $R$ of the Mayer series  obtained in
\cite{Pe63} and \cite{Ru63} still stands as the best available in
the literature. It is worth to remind that the first singularity of
Mayer series is in general not on the positive real fugacity axis
(e.g. it is in the negative axis at $\l=-R$ for repulsive
potentials, see \cite{Ru}) so that the value $\l=R$ for the fugacity
is not directly related to any physical phase transition point.
However a  lower bound for $R$ as sharp as possible is  a physically
relevant information since it allows to maximize  the  region of
analyticity of the pressure of the system, i.e. where the system is
surely in the gas phase and no phase transitions occur. In
this respect, there have been only very few improvements on the
Penrose-Ruelle bound for $R$ by considering some restricted
sub-classes of stable and  tempered potentials. Basically such
improvements can be summarized as follows. Brydges and collaborators
improved the Penrose-Ruelle bound for absolutely summable pair
potential (see e.g.  \cite{BM} and references therein) and Basuev
gave in \cite{Ba2} an improvement for a significant class of
potentials, introduced by himself in \cite{Ba1}, which is
sufficiently large to include Lennard-Jones type potentials  (see
also recently \cite{dLPY}).

\\The method used to get the general  bounds of 1963 were based on the so-called Kirkwood-Salzburg equations
(see \cite{Ru} and references therein). It is an indirect method which aims
to control simultaneously all correlation functions of the systems by analyzing an infinite set of equations involving them.
On the other hand,
the methods used to get improvements cited above for subclasses of  stable and tempered potentials are all based
on trying  to bound directly the Mayer coefficients.
The main technical tools  to obtain this task are the so-called tree-graph identities. Several
alternative identities have been proposed along the last four decades   (see  \cite{BM,dLPY} and references therein).

\\The very first of these tree-graph identities, proposed by Penrose \cite{Pe67} in 1967,
was based on the existence of a map $\bm M$ (a so-called {\it partition scheme})
from the set ${\mathcal T}_n$ of the labeled trees with $n$ vertices to the set ${\mathcal G}_n$
of the labeled connected graphs  with $n$ vertices. This map is able to  induce a partition of the set ${\mathcal G}_n$ into blocks
indexed by the elements of set ${\mathcal T}_n$ (details in Section \ref{sec3}).
Penrose gave an explicit construction of such a map and used it
to rewrite the Mayer coefficients as sum of terms indexed by trees rather than by connected graphs.
Using this alternative expression of the Mayer coefficients, Penrose was then able to bound them directly,
reobtaining the same bound of 1963 but at the cost to   impose a further condition on the pair potential, beyond stability and temperedness.
Namely, the pair potential needed to have an  hard-core.
This fact, we think, has led researchers to believe that the identity proposed by Penrose was  useful only for systems
with hard-core and in fact in the literature it is mentioned only for such kind of systems (see, e.g.,  \cite{FP,So1} and reference therein).

\\In the present paper we disclaim this belief  and show that
the Penrose tree-graph identity  can be used
to strongly improve the old Penrose-Ruelle bounds
for general stable and tempered potentials.
The main idea is based on the
definition of a new map $\bm M$, different from the one originally proposed by Penrose,
which still is able to make a partition of the set of connected graphs whose blocks are  indexed
by  minimum spanning trees, where the minimality is basically on the energies of the edges of the tree.
Such new partition scheme, as we will see,  allows to use in an optimal
way the stability condition of the pair potential.

\\The rest of the paper is organized as follows. In Section \ref{sec2} we remind the basic concepts
for  continuous systems of classical particles, we give the definitions of stability
and temperedness and write down the Mayer series explicitly. We end the section by recalling the Penrose-Ruelle result (Theorem
0) and by stating the main results of the paper. Namely, in Theorem \ref{teo1} we present the new bound on the
$n^{\rm th}$-order Mayer coefficient and consequently the new lower bound on the convergence
radius of the Mayer series
 and in Proposition \ref{pro1} we give the main technical tool to proof Theorem \ref{teo1}, {\it id est} a new tree graph inequality, deduced from the
Penrose tree-graph identity.
In Section 3 we prove Proposition  \ref{pro1}. We start by recalling  the Penrose identity
for a generic partition scheme $\bm M$, we then introduce our new partition scheme
and show how to use it to obtain the inequality given in Proposition \ref{pro1}. In Section 4 we conclude the
proof of Theorem \ref{teo1}. Finally, in Section 5 we give a comparison with previous results
and add some concluding remarks.

\section{Model and Results}\label{sec2}

\subsection{Model}
We  consider a continuous system of  classical particles in the $d$-dimensional Euclidean space $ \mathbb{R}^d$. Denote by $x_i\in \mathbb{R}^d$
the position vector of the $i^{\rm th}$ particle of the system and by $\|x_i\|$ its Euclidean norm. We  suppose
hereafter that particles interact through a pair potential $V:\mathbb{R}^d\to \mathbb{R}\cup\{+\infty\}$
such that $V(-x)=V(x)$ and we set shortly $\mathbb{R}^*\doteq \mathbb{R}\cup\{+\infty\}$. Given
a configuration
$(x_1,\dots,x_n)\in {\mathbb R}^{dn}$ of the system such that  $n$ particles are present, the energy $U(x_1,\dots,x_n)$
of this configuration is defined as
$$
U(x_1,\dots,x_n)=\sum_{1\le i< j\le n}V(x_i-x_j)
$$
We work in the Grand Canonical Ensemble, where the statistical mechanics of the system  is  governed by the following partition function
\be\label{1.1}
\Xi_{\L}(\l,\b)=\sum_{n=0}^{\infty}\frac{\l^{n}}{n!} \int_\L dx_1
\dots \int_\L dx_n e^{-\b U(x_1,\dots,x_n)}
\ee
where $\L\subset{\mathbb R}^d$ is typically    a $d$-dimensional cube with center in the origin,
 $\lambda\in (0,+\infty)$ is the activity of the system and $\b\in (0,+\infty)$ is the inverse temperature.

\\The connection with thermodynamics is obtained by taking the logarithm of the partition function.
Namely, the pressure  $P_\L(\b,\l)$ and the density $\r_\L(\b,\l)$ of the system at fixed values of the
thermodynamic parameters inverse temperature $\b$, fugacity  $\l$ and volume $\L$,
are given  by
\be\label{press}
P_\L(\b,\l)= {1\over \b |\L|}\log \Xi_{\L}(\l,\b)
\ee
\be\label{dens}
\r_\L(\b,\l)~=~ {\l\over|\L|}{\partial\over \partial\l}\log \Xi_{\L}(\l,\b)
\ee

\\It is a long known fact  \cite{MM} that $\log \Xi_{\L}(\l,\b)$ can be expanded as a power series of the fugacity. Namely,
\be\label{pressm}
{1\over|\L|}\log \Xi_{\L}(\l,\b) ~=~ \l+ \sum_{n=2}^{\infty}C_n(\b,\L)\l^n
\ee
The series above is the so-called Mayer series and its coefficients $C_n(\b,\L)$, known nowadays as {\it Mayer
coefficients}, admit the following explicit expression.
\be\label{urse}
C_n(\b,\L)~=~{1\over |\L|}{1\over n!}\int_{\L}\,dx_1
\dots \int_{\L} dx_n\:\sum\limits_{g\in {\mathcal G}_{n}}
\prod\limits_{\{i,j\}\in E_g}\left[  e^{ -\b V(x_i-x_j)} -1\right]
\ee
where ${\mathcal G}_n$ denotes the set of all connected graphs with vertex set
$\{1,2,\dots,n\}$ (a.k.a. labeled connected graphs with $n$ vertices) and $E_g$ denotes the edge set  of $g\in {\mathcal G}_n$.

\\Clearly, via Mayer series one can write both pressure (\ref{press})
and density (\ref{dens})
as  expansions in power of the fugacity $\l$.

\\Some conditions  on the potential $V$ must be imposed to have hope to control the series. Stability and temperedness are
commonly  considered as minimal   conditions to  guarantee a good
statistical mechanics behavior of the  system (see, e.g., \cite{Ru}
and \cite{Ga}).

\begin{defi}[Stability]
A pair potential $V$ is said to be stable if
\be\label{stabi}
B:=\sup_{n\ge 2}\sup_{(x_1,\dots,x_n)\in \mathbb{R}^{dn}}-{1\over n}U(x_1,\dots,x_n) < +\infty
\ee
We call $B$ the stability constant of the pair potential $V$.
\end{defi}

\begin{defi}[Temperedness]
A pair potential $V$ is said to be tempered  if there exists $r_0\ge 0$ such that
\be\label{temp}
\int_{\|x\|\ge r_0}  |V(x)| \;dx < \infty
\ee
\end{defi}
Observe that the stability constant of any tempered potential $V$ is non-negative.

\vv\vv
\\As mentioned in the introduction,
the best rigorous upper bound on $|C_n(\b,\L)|$ so far (and hence the best lower bound on the convergence radius
of the Mayer series)  for
stable and tempered pair  potentials is that
obtained by Penrose and Ruelle in 1963 \cite{Pe63,Ru63}.

\vv\vv
\\{\bf Theorem 0 (Penrose Ruelle 1963)}
{\it
Let $V$ be a stable and tempered pair  potential with  stability constant $B$. Then
the  $n$-order Mayer  coefficient $C_n(\b,\L)$ defined in (\ref{urse})
is bounded by
\be\label{bmaru}
|C_n(\b,\L)|\le e^{2\b B (n-1)}n^{n-2} {[C(\b)]^{n-1}\over n!}
\ee
where
\be\label{cb}
C(\b)=\int_{\mathbb{R}^{d}} dx ~ |e^{-\b V(x)}-1|
\ee
Therefore the Mayer series (\ref{pressm}) converges absolutely, uniformly in $\L$,
 for any complex  $\l$ inside the disk
\be\label{radm}
|\l| <{1\over e^{2\b B+1} C(\b)}
\ee
I.e. the convergence radius $R$ of the Mayer series  admits the following lower bound
\be\label{rold}
R\ge R_{\rm PR}\,\doteq\,{1\over e^{2\b B+1} C(\b)}
\ee
}

\\Observe that the non-negative quantity $C(\b)$ defined in (\ref{cb}) is  finite if $V$ is stable and tempered (see e.g. \cite{Ru}).

\subsection{Results}
\\As said in the introduction, the bound on  the absolute value of the $n^{\rm th}$-order Mayer coefficient $C_n(\b,\L)$ appearing in Theorem 0 has been obtained ``indirectly" using the method of
Kirkwood-Salzburg equations. Our  main result
is the following Theorem \ref{teo1} below which improves strongly Theorem 0 via a direct bound on $C_n(\b,\L)$
by using the old Penrose tree-graph identity proposed in \cite{Pe67}.

\begin{teorema}\label{teo1}
 Let $V$ be a stable and tempered pair  potential with  stability constant $B$. Then
the  $n$-order Mayer  coefficient $C_n(\b,\L)$
is bounded by
\be\label{bteo1}
|C_n(\b,\L)|\le e^{\b Bn}n^{n-2} {[\hat C(\b)]^{n-1}\over n!}
\ee
where
\be\label{hcb}
\hat C(\b)=\int_{\mathbb{R}^{d}} \Big[1-e^{-\b |V(x)|} \Big]dx
\ee
Therefore the Mayer series  converges absolutely, uniformly in $\L$,
for any complex  $\l$ inside the disk
\be\label{radp}
|\l| <{1\over e^{\b B+1} \hat C(\b)}
\ee
I.e. the convergence radius $R$ of the Mayer series  admits the following lower bound
\be\label{rnew}
R\ge R^{*}\doteq {1\over e^{\b B+1} \hat C(\b)}
\ee
\end{teorema}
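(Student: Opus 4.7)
The plan is to deduce Theorem~\ref{teo1} directly from the tree-graph inequality established in Proposition~\ref{pro1}. Assuming that inequality has the expected form
$$
\Bigl|\sum_{g\in{\mathcal G}_n}\prod_{\{i,j\}\in E_g}\bigl(e^{-\b V(x_i-x_j)}-1\bigr)\Bigr|
\le e^{\b B n}\sum_{T\in{\mathcal T}_n}\prod_{\{i,j\}\in E_T}\bigl(1-e^{-\b|V(x_i-x_j)|}\bigr),
$$
the first step is to plug this bound into the explicit representation \reff{urse} of $C_n(\b,\L)$, using the triangle inequality to bring the absolute value inside the spatial integral. Since the right-hand side is a finite sum of non-negative functions depending only on the differences $x_i-x_j$, by Fubini the sum over trees can be exchanged with the $x$-integration without trouble.

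For each fixed $T\in{\mathcal T}_n$ the resulting ``integration over a tree'' is standard. Because the integrand is non-negative, I may enlarge the domain from $\L$ to $\mathbb{R}^d$ for every coordinate except one, say $x_1$, which I designate as the root of $T$. Peeling the tree inward from its leaves and performing the change of variables $y=x_i-x_{\mathrm{parent}(i)}$ at each step, each of the $n-1$ edges contributes a factor
$$
\int_{\mathbb{R}^d}\bigl(1-e^{-\b|V(y)|}\bigr)\,dy=\hat C(\b).
$$
After all peelings only the root coordinate $x_1$ remains, integrating over $\L$ to give $|\L|$, which cancels the prefactor $1/|\L|$. Summing over $T\in{\mathcal T}_n$ via Cayley's formula $|{\mathcal T}_n|=n^{n-2}$ then produces exactly the bound \reff{bteo1}.

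Finally, to extract the convergence radius \reff{rnew}, I would insert \reff{bteo1} into the Mayer series \reff{pressm} and apply Stirling's inequality $n!\ge\sqrt{2\pi n}\,(n/e)^n$, giving $n^{n-2}/n!\le e^n/(n^2\sqrt{2\pi n})$. The $n$-th term is thus dominated by $[\hat C(\b)\sqrt{2\pi n}\,n^2]^{-1}\bigl(e^{\b B+1}\hat C(\b)|\l|\bigr)^{n}$, so the Mayer series converges absolutely and uniformly in $\L$ as soon as $|\l|<R^{*}$. The essential content of the argument sits in Proposition~\ref{pro1}; everything following it is a routine rewriting. The point I would double-check most carefully is that the minimum-spanning-tree partition scheme really yields the stability constant $e^{\b B n}$ rather than the $e^{2\b B(n-1)}$ of the Penrose--Ruelle analysis, since it is precisely this improvement, combined with the replacement of $C(\b)$ by the generally smaller $\hat C(\b)$, that promotes the radius from $R_{\mathrm{PR}}$ to $R^{*}$.
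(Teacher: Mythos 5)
Your proposal is correct and follows essentially the same route as the paper: insert the inequality of Proposition \ref{pro1} into \reff{urse}, integrate each tree by rooting it and changing variables to edge differences (the paper's Lemma \ref{le3}, which relabels so every vertex $k\ge 2$ has a parent $j_k<k$, exactly your leaf-peeling), obtain $|\L|[\hat C(\b)]^{n-1}$ per tree, and sum via Cayley's formula. Your Stirling estimate for the radius is the standard final step the paper leaves implicit, and your closing caveat is precisely what Lemma \ref{stabp} of the paper secures.
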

Observe that, according to the  definitions (\ref{cb}) and (\ref{hcb}), we have  $0\le \hat C(\b)\le C(\b)$.

\vv
\\{\bf Remark 1}.
\\The improvement given by Theorem \ref{teo1} respect to Theorem 0 is
twofold. First, the factor $e^{\b B+1}$ in (\ref{rnew})  replaces
the factor $e^{2\b B+1}$ in (\ref{rold}). Second, the factor $\hat C(\b)$ in (\ref{rnew})  replaces the factor $C(\b)$ in
(\ref{rold}) and clearly, recalling their definitions (\ref{hcb}) and (\ref{cb})  respectively, we have $\hat C(\b)\le C(\b)$ where
the equality only holds  if $V$ is non-negative  (purely repulsive). Moreover observe that while $\hat C(\b)$
grows at most linearly  in $\b$, the factor $C(\b)$ can grow exponentially with $\b$ (for $V$ with an attractive part).
So, 
the ratio
$
{R^{*}
/R_{\rm PR}
}
$
always greater than one,  is the product of two factors,
 $e^{\b B}$ and $[{\hat C(\b)/ C(\b)}]$,  both
growing exponentially fast
with $\b$ when $V$ has a negative part. Due to this exponential behavior in $\b$, our improvement on the Penrose-Ruelle bound
is rather astonishing for attractive potentials (i.e. those with strictly positive stability constant $B$) at non small $\b$ (i.e. low  temperatures).
To give an idea,   for the Lennard-Jones gas at inverse temperature $\b=1$, using the value $B_{\rm LJ}=8.61$
 for its stability constant  (see \cite{JI}), our lower bound
is at least $8.5\times 10^4$ larger than the Penrose-Ruelle lower bound, while for $\b=10$
is at least $7.26\times 10^{43}$ larger than the Penrose-Ruelle lower bound. On the other hand,
for $\b$ small the exponential $e^{\b B}$ is near to one and $\hat C(\b)$ becomes very close to $C(\b)$  so that  our improvement is  much less
sensitive at high temperatures. Finally, for purely repulsive pair potentials (i.e. those with stability constant $B=0$) our bound
coincides with the Penrose-Ruelle bound since in this case $e^{\b B}=1$ and $\hat C(\b)=C(\b)$.

 \vv
\\{\bf Remark 2}. As mentioned above,
the density of the system $\r_\L(\b,\l)$ given in (\ref{dens}) can also be written, via (\ref{pressm}),
in terms of a power series of the fugacity $\l$.
This series allows to express the fugacity $\l$ as a power series of the density $\r_\L$
which, plugged into (\ref{pressm}), gives the  so-called Virial series (see, e.g., \cite{MM}),
i.e., the pressure as a function of the density and temperature, or, in other words,  the equation of state of the system.
In 1964 Lebowitz and Penrose \cite{LP} (provided
the following lower bound on the convergence radius $\cal R$ of the Virial series of a gas of particles interacting via a stable and
tempered pair potential.
\be\label{virialold}
{\cal R}\ge {\cal R}_{\rm LP}\doteq {g(e^{2\b B})\over e^{2\b B}\hat C(\b)}
\ee
with
$$
g(u)= \max_{0<w<1} {[(1+u)e^{-w} -1]w\over u}
$$
The authors  deduced such a bound, via Lagrange inversion, from the Penrose-Ruelle
bounds  (\ref{bmaru}) of the Mayer coefficients (see also \cite{Gr} for an alternative method that deal directly with the
coefficients of the Viral series written in terms of two-connected graphs).
If one redoes the calculations performed by Lebowitz and Penrose  using the new
upper bound   (\ref{bteo1}) for the absolute value of the $n^{\rm th}$-order Mayer coefficient (with $\hat C(\b)$ in place of $C(\b)$
and $e^{\b B}$ in place of $e^{2\b B}$), then one immediately obtains
 the
following lower bound for the convergence radius ${\cal R}$ of the Viral series
\be\label{virial}
{\cal R}\ge {\cal R}^{*}\doteq  {g(e^{\b B})\over e^{\b B}\hat C(\b)}
\ee
Observing that $0.1448\le g(u)\le e^{-1}$ for $u\in [1,\infty)$ (see, e.g.,
Remark 2 below Theorem 1 in \cite{MP}), we have that the ratio ${\cal R}^{*}/{\cal R}_{\rm LP}$ between the new lower bound (\ref{virial})
for the convergence radius of the Viral series
and the old Lebowitz-Penrose bound given in (\ref{virialold}) is of the same order of magnitude of the ratio ${R^*/R_{PR}}$ calculated above.

\vv
\\The proof of Theorem \ref{teo1} relies  on a non trivial tree-graph inequality on the absolute value of the integrand of the
r.h.s. of (\ref{urse}) (the so-called
Ursell functions)  as far as  stable pair potentials are considered.
This new tree-graph inequality, an additional result of the present paper which we consider interesting {\it per se},  can be stated as follows.

\begin{pro}\label{pro1}
Let $V$ be a stable pair  potential with  stability constant $B$.  Then, for any
$n\ge 2$ and any $(x_1,\dots,x_n)\in \mathbb{R}^{dn}$, the following inequality holds
\be\label{nine}
|\sum_{g \in {\mathcal G}_n} \prod_{\{i,j\} \in E_g} (e^{-\b V(x_i-x_j)}-1)|~\le ~
e^{\b Bn}\sum_{\tau \in {\mathcal T}_n} \prod_{\{i,j\} \in E_\tau} (1-e^{-\b| V(x_i-x_j)|})
\ee
where  ${\mathcal T}_n$ denotes the set of all trees with vertex set
$[n]$ (a.k.a. labeled trees with $n$ vertices).
\end{pro}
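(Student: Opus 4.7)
The plan is to combine the Penrose tree-graph identity with a partition scheme $\bm{M}\colon\mathcal{G}_n\to\mathcal{T}_n$ built from minimum spanning trees. For any partition scheme whose pre-images have the box structure $\bm{M}^{-1}(\tau)=\{g\in\mathcal{G}_n : E_\tau\subseteq E_g\subseteq E_\tau\cup A(\tau)\}$ for a suitable $A(\tau)\subseteq E(K_n)\setminus E_\tau$, the general Penrose identity reads
\[
\sum_{g\in\mathcal{G}_n}\prod_{e\in E_g}f_e \;=\; \sum_{\tau\in\mathcal{T}_n}\prod_{e\in E_\tau}f_e\prod_{e\in A(\tau)}(1+f_e),
\]
with $f_e=e^{-\beta V(x_i-x_j)}-1$ and, crucially, $1+f_e=e^{-\beta V(x_i-x_j)}\ge 0$. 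I would take $\bm{M}(g)$ to be the minimum spanning tree of $g$ with respect to the edge energies $V_e=V(x_i-x_j)$ (ties broken by a fixed total order on edges). The standard MST cycle property then identifies
\[
A(\tau)=\{e\notin E_\tau : V_e\ge V_{e'}\text{ for every }e'\in P_\tau(e)\},
\]
where $P_\tau(e)$ is the unique path in $\tau$ between the endpoints of $e$.

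Next, taking absolute values, using $1+f_e\ge 0$, and applying the pointwise factorization $|e^{-\beta V_e}-1|=e^{-\beta V_e^{-}}(1-e^{-\beta|V_e|})$ with $V_e^{-}:=\min(V_e,0)\le 0$, the identity above yields
\[
\Big|\sum_{g\in\mathcal{G}_n}\prod_{e\in E_g}f_e\Big|\;\le\;\sum_{\tau\in\mathcal{T}_n}\prod_{e\in E_\tau}(1-e^{-\beta|V_e|})\,\exp\Big\{-\beta\Big[\sum_{e\in E_\tau}V_e^{-}+\sum_{e\in A(\tau)}V_e\Big]\Big\}.
\]
Proposition~\ref{pro1} therefore reduces to the \emph{MST stability inequality}
\[
\sum_{e\in E_\tau}V_e^{-}+\sum_{e\in A(\tau)}V_e\;\ge\;-Bn
\]
for every tree $\tau$ arising in the partition.

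I expect this stability inequality to be the main obstacle. The natural starting point is the global bound $U=\sum_{\{i,j\}}V_{ij}\ge -Bn$ from stability. Its defect against the left-hand side equals $\sum_{e\in E_\tau}V_e^{+}+\sum_{e\in F(\tau)}V_e$, where $F(\tau)=E(K_n)\setminus(E_\tau\cup A(\tau))$ is the forbidden set and $V_e^{+}:=\max(V_e,0)$. The MST cut and cycle properties constrain these two terms sharply: each tree edge is the cheapest crossing of its fundamental cut of $K_n$, and each forbidden edge is strictly outcompeted in weight by some tree edge on its path. A careful edge-by-edge charging of positive tree-edge contributions and forbidden-edge contributions to their associated cuts and cycles should make the defect nonpositive in aggregate, so that the MST stability inequality follows from ordinary stability. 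This is the mechanism by which the new partition scheme exploits stability optimally and produces the factor $e^{\beta Bn}$ of Proposition~\ref{pro1} in place of the $e^{2\beta B(n-1)}$ of the Penrose--Ruelle analysis.
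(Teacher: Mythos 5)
Your setup reproduces the paper's route almost verbatim: the general Penrose identity for a box-structured partition scheme, the minimum-spanning-tree choice of the scheme with the cycle-property description of $A(\tau)=E_{\bm M(\tau)}\setminus E_\tau$, and the factorization $|e^{-\beta V_e}-1|=e^{-\beta V_e^{-}}(1-e^{-\beta|V_e|})$ correspond exactly to Corollary \ref{cor1} and Definitions \ref{hyp}--\ref{fV}, and your ``MST stability inequality'' $\sum_{e\in E_\tau}V_e^{-}+\sum_{e\in A(\tau)}V_e\ge -Bn$ is precisely the paper's Lemma \ref{stabp}. The gap is in how you propose to prove that inequality. You want to deduce it from the single global bound $U(x_1,\dots,x_n)\ge -Bn$ by showing that the defect $D(\tau)=\sum_{e\in E_\tau}V_e^{+}+\sum_{e\in F(\tau)}V_e$ is nonpositive. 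That is false. Take $n=3$, $\tau$ the path with edges $\{1,2\},\{2,3\}$, and a configuration of a purely repulsive potential (so $B=0$) with $V_{12}=100$, $V_{23}=0$, $V_{13}=1$: then $\{1,3\}\in F(\tau)$ because $V_{13}<V_{12}$, and $D(\tau)=100+0+1>0$. No charging scheme can rescue this, since for a nonnegative potential there are no negative terms anywhere to absorb a large positive tree edge; also note that a tree edge is the cheapest crossing of its fundamental cut only within $\bm M_f(\tau)$, not within $K_n$, so the cut property you invoke is weaker than you assume. The real point is that the target inequality does not follow from stability of the full $n$-point configuration alone.

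What the paper does instead (Lemma \ref{stabp}) is apply stability to \emph{sub}-configurations. Let $\tau_1,\dots,\tau_k$ be the connected components of the forest $E_\tau\setminus E_\tau^{+}$ of negative-energy tree edges, with vertex sets $\mathcal V_{\tau_1},\dots,\mathcal V_{\tau_k}$. The MST structure yields two facts: an edge of $\bm M_f(\tau)$ joining two distinct components dominates a nonnegative tree edge on its tree path, hence has $V_e\ge 0$ and may be discarded from a lower bound; and a pair lying inside one component but \emph{not} belonging to $\bm M_f(\tau)$ is dominated by a negative tree edge on its path, hence has $V_e\le 0$ and may be added to a lower bound. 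Consequently
\begin{equation*}
\sum_{e\in E_{\bm M_f(\tau)}\setminus E_\tau^{+}}V_e\;\ge\;\sum_{s=1}^{k}\,\sum_{\{i,j\}\subset \mathcal V_{\tau_s}}V_{ij}\;=\;\sum_{s=1}^{k}U\bigl((x_i)_{i\in\mathcal V_{\tau_s}}\bigr)\;\ge\;-B\sum_{s=1}^{k}\lvert\mathcal V_{\tau_s}\rvert\;=\;-Bn,
\end{equation*}
where stability is invoked once per component. Replacing your defect argument by this component-wise use of stability closes the gap; everything else in your proposal is sound and coincides with the paper's proof.
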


\\The proofs of Proposition \ref{pro1} and Theorem \ref{teo1} are given in sections 3 and 4 respectively.

\section{Proof of Proposition \ref{pro1}}\label{sec3}
In this section we set
 $[n]:=\{1, ... ,n\}$ and $E_n:=\{\{i,j\} :\; i,j \in [n], \ i <j \}$. We recall also that ${\mathcal G}_n$ is the set of all connected
graphs with vertex set $[n]$ (i.e labeled connected graphs with $n$ vertices) and ${\mathcal T}_n$
is the set of all trees with vertex set $[n]$ (i.e labeled trees with $n$ vertices). Given $g\in {\mathcal G}_n$, we denote
$E_g$ its edge set. Given  $g\in {\mathcal G}_n$ and $g'\in {\mathcal G}_n$ such that $E_g\subset E_{g'}$, we say that $g$ is a subgraph of $g'$ and
write $g\subset g'$. Note that the pair $({\mathcal G}_n, \subset)$ is a partially ordered set.
Given  $g,g'\in {\mathcal G}_n$  such that $g\subset g'$, we denote
$
[g,g'] =\{g''\in {\mathcal G}_n: g\subseteq g''\subseteq g'\}
$.
In other words $[g,g']$ is an interval in $({\mathcal G}_n,\subset)$.
Given a set  $S$, we denote by $|S|$ its
cardinality.
\subsection{General Penrose identity}

Penrose identity is a rewriting of the Ursell functions
$$
\Phi^T(x_1,\dots,x_n)~=~\sum\limits_{g\in {\mathcal G}_{n}}~
\prod\limits_{\{i,j\}\in E_g}\left[  e^{ -\b V(x_i -x_j)} -1\right]
$$
based on  the existence of a map ({\it partition scheme}) from the set ${\mathcal T}_n$ of the  trees with vertex set $[n]$ to the set  ${\mathcal G}_n$
of the connected graphs with vertex set $[n]$.

\begin{defi}\label{partschem} A map $\bm{M}: {\mathcal T}_n\to {\mathcal G}_n$ is called a partition scheme in the set of the connected graphs
${\mathcal G}_n$  if, for all $\tau\in {\mathcal T}_n$,
$\tau\subset \bm{M}(\t)$ (i.e. $\t$ is subgraph of $\bm{M}(\t)$)
and $ {\mathcal G}_n=\biguplus_{\tau\in {\mathcal T}_n}[\tau,\bm M(\tau)]$
where $\biguplus$ means disjoint union and $[\tau,\bm M(\tau)]=\{g\in {\mathcal G}_n: \tau\subset g\subset \bm M(\tau)\}$
 is an interval in $ {\mathcal G}_n$ (with respect to the set-inclusion).
\end{defi}

\\Once a  partition scheme in ${\mathcal G}_n$ has been given, we have the following identity

\begin{teorema}[General Penrose identity]\label{Penid} Let
$V(x)$ be a pair potential. Let $n\ge 2$.
Let  $\bm{M}: {\mathcal T}_n\to {\mathcal G}_n$ be a partition scheme in ${\mathcal G}_n$.
Then, for any  $(x_1,\dots,x_n)\in \mathbb{R}^{dn}$ the following identity  holds
\be\label{penide}
\sum\limits_{g\in {\mathcal G}_{n}}~\prod\limits_{\{i,j\}\in E_g}\left[  e^{ -\b V(x_i -x_j)} -1\right]~=~
\sum_{\t\in {\mathcal T}_n}\left[\prod_{\{i,j\}\in E_\t}\left(e^{- \b V(x_i-x_j)}-1\right)\right]
e^{-\b\sum\limits_{\{i,j\}\in E_{\bm{M}(\t)}\backslash E_\t}V(x_i-x_j)}
\ee
\end{teorema}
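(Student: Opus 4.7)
The plan is to exploit the disjoint-union property of the partition scheme to reduce the left-hand side of \reff{penide} to a sum over trees, and then to recognize the resulting inner sum as a standard subset-expansion. I would first abbreviate $f_{ij}:=e^{-\b V(x_i-x_j)}-1$, noting the key identity $1+f_{ij}=e^{-\b V(x_i-x_j)}$, so that the left-hand side becomes $\sum_{g\in \mathcal{G}_n}\prod_{\{i,j\}\in E_g}f_{ij}$.

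By the defining property of a partition scheme, $\mathcal{G}_n=\biguplus_{\tau\in \mathcal{T}_n}[\tau,\bm{M}(\tau)]$, so splitting the sum according to which block each connected graph belongs to yields
\begin{equation*}
\sum_{g\in \mathcal{G}_n}\prod_{\{i,j\}\in E_g}f_{ij}\,=\,\sum_{\tau\in \mathcal{T}_n}\ \sum_{g\in [\tau,\bm{M}(\tau)]}\prod_{\{i,j\}\in E_g}f_{ij}.
\end{equation*}
For each fixed $\tau$, writing $A_\tau:=E_{\bm{M}(\tau)}\setminus E_\tau$, the condition $g\in [\tau,\bm{M}(\tau)]$ means precisely $E_\tau\subseteq E_g\subseteq E_{\bm{M}(\tau)}$; equivalently, $E_g=E_\tau\cup S$ for a unique $S\subseteq A_\tau$. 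This bijection gives
\begin{equation*}
\sum_{g\in [\tau,\bm{M}(\tau)]}\prod_{\{i,j\}\in E_g}f_{ij}\,=\,\prod_{\{i,j\}\in E_\tau}f_{ij}\ \sum_{S\subseteq A_\tau}\prod_{\{i,j\}\in S}f_{ij}.
\end{equation*}

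To conclude, I would apply the elementary identity $\sum_{S\subseteq A}\prod_{e\in S}f_e=\prod_{e\in A}(1+f_e)$, which turns the inner sum into $\prod_{\{i,j\}\in A_\tau}e^{-\b V(x_i-x_j)}=e^{-\b\sum_{\{i,j\}\in A_\tau}V(x_i-x_j)}$. Inserting this back into the previous display reproduces exactly the right-hand side of \reff{penide}, since the factor associated with $\tau$ is now the product of $\prod_{\{i,j\}\in E_\tau}(e^{-\b V(x_i-x_j)}-1)$ and $e^{-\b\sum_{\{i,j\}\in E_{\bm M(\tau)}\setminus E_\tau}V(x_i-x_j)}$.

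There is no real obstacle here: the identity is purely algebraic and its entire content is already packaged into the hypothesis that $\bm{M}$ is a partition scheme---that each $[\tau,\bm{M}(\tau)]$ is a genuine Boolean interval of edge sets and that these intervals tile $\mathcal{G}_n$. The substantive difficulty in the paper lies not in this theorem but in constructing a clever $\bm{M}$ whose ``extra edges'' $E_{\bm{M}(\tau)}\setminus E_\tau$ can be controlled via the stability bound so as to produce the factor $e^{\b Bn}$ appearing in Proposition \ref{pro1}; the general identity above is only the algebraic starting point for that analysis.
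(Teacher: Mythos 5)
Your proof is correct and follows essentially the same route as the paper: split the sum over $\mathcal{G}_n$ according to the intervals $[\tau,\bm M(\tau)]$, factor out the tree edges, expand the remaining sum over subsets of $E_{\bm M(\tau)}\setminus E_\tau$ via $\sum_{S\subseteq A}\prod_{e\in S}f_e=\prod_{e\in A}(1+f_e)$, and use $1+f_{ij}=e^{-\b V(x_i-x_j)}$. No issues.
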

\def\Ti{\bm{T}}
\def\Mi{\bm{M}}

\\{\bf Proof }. Let us pose shortly $V_{ij}=\b V(x_i-x_j)$. Since
${\mathcal G}_n$ is the disjoint union ${\mathcal G}_n=\biguplus_{\tau\in {\mathcal T}_n}[\tau,\bm{M}(\tau)]$ we can write
$$
\sum_{g\in {\mathcal G}_n} \prod_{\{i,j\}\in E_g}\left(e^{- V_{ij}}-1\right)~=~\sum_{\t\in {\mathcal T}_n} \sum_{g\in  [\tau,\bm{M}(\tau)]}
\prod_{\{i,j\}\in E_g}\left(e^{- V_{ij}}-1\right)~=~~~~~~~~~~~~~~~~~~~~~~~~~~~~~~~~~~~~~~~~~
$$
$$
~~~~~~~~~~~~~~~~~~~~~~=~
\sum_{\t\in {\mathcal T}_n} \prod_{\{i,j\}\in E_\t}\left(e^{- V_{ij}}-1\right)\sum_{g\in G_n\atop E_\t\subset E_g\subset E_{\bm{M}(\t)}}
 \prod_{\{i,j\}\in E_g\setminus E_\t}\left(e^{- V_{ij}}-1\right) =
$$
 $$
~~~~~~~~~~~~~~~~~~~=~ \sum_{\t\in {\mathcal T}_n} \prod_{\{i,j\}\in E_\t}\left(e^{- V_{ij}}-1\right)\sum_{E\subset E_{\bm{M}(\t)}\setminus E_\t}
 \prod_{\{i,j\}\in E}\left(e^{- V_{ij}}-1\right)~=~
 $$
 $$
~~~~~~~~~~~~~~~~\,\,\,\,\;~~~~\,=~ \sum_{\t\in {\mathcal T}_n} \left[\prod_{\{i,j\}\in E_\t}\left(e^{- V_{ij}}-1\right)\right]
 \prod_{\{i,j\}\in E_{\bm{M}(\t)}\setminus E_\t}\left[\left(e^{- V_{ij}}-1\right)+1\right]~=~
 $$
 $$
 ~~~~\;\;\;\,~~\:~~~~=
 ~ \sum_{\t\in {\mathcal T}_n} \left[\prod_{\{i,j\}\in E_\t}\left(e^{- V_{ij}}-1\right)\right]
 \prod_{\{i,j\}\in E_{\Mi(\t)}\setminus E_\t} e^{- V_{ij}}~~~~~~~~~~~\Box
$$

\\From Theorem \ref{Penid} the  corollary below easily follows.

\begin{cor}\label{cor1}  Let
$V(x)$ be a pair potential. Let $n\ge 2$.
Let  $\bm{M}: {\mathcal T}_n\to {\mathcal G}_n$ be a partition scheme in ${\mathcal G}_n$.
Then, for any  $(x_1,\dots,x_n)\in \mathbb{R}^{dn}$ the following inequality  holds
\be\label{tre}
\left|\sum\limits_{g\in G_{n}}~
\prod\limits_{\{i,j\}\in E_g}\left[  e^{ -\b V(x_i -x_j)} -1\right]\right|~
\le~
\sum_{\t\in T_n}
e^{-\b\sum\limits_{\{i,j\}\in E_{\bm{M}(\t)}\backslash E^+_\t}V(x_i-x_j)}
\prod_{\{i,j\}\in E_\t}\left(1-e^{- \b |V(x_i-x_j)|}\right)
\ee
where  $E_\tau^+$ denotes the set of edges of the tree $\tau$ with non-negative energy. That is,
\be\label{t+}
E_\tau^+=\{\{i,j\}\in E_\t:~V(x_i-x_j)\ge 0\}
\ee
\end{cor}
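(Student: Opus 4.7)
The plan is to start from the general Penrose identity (Theorem \ref{Penid}), apply the triangle inequality to bring the absolute value inside the sum over trees, and then carefully analyze the tree-edge factors $|e^{-\b V(x_i-x_j)}-1|$ according to the sign of $V(x_i-x_j)$. The non-trivial bookkeeping is to absorb the sign-dependent excess into the already-present Penrose exponential.

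First I would write $V_{ij}:=\b V(x_i-x_j)$ and, using (\ref{penide}), pass the absolute value through the sum over $\t\in{\mathcal T}_n$ and under the product over tree edges. Note that the Penrose exponential factor $\exp\bigl(-\sum_{\{i,j\}\in E_{\bm{M}(\t)}\setminus E_\t}V_{ij}\bigr)$ stays untouched on the right-hand side, while each tree-edge factor $(e^{-V_{ij}}-1)$ becomes $|e^{-V_{ij}}-1|$. So the task reduces to bounding, for each edge $\{i,j\}\in E_\t$, the quantity $|e^{-V_{ij}}-1|$ in a way that produces the desired factor $(1-e^{-|V_{ij}|})$.

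The key elementary observation is that $|e^{-V_{ij}}-1|=1-e^{-|V_{ij}|}$ when $V_{ij}\ge 0$ (i.e.\ $\{i,j\}\in E_\t^+$), whereas when $V_{ij}<0$ one has $|e^{-V_{ij}}-1|=e^{-V_{ij}}-1=e^{-V_{ij}}\bigl(1-e^{-|V_{ij}|}\bigr)$, since $|V_{ij}|=-V_{ij}$. In both cases we get the universal factor $(1-e^{-|V_{ij}|})$, at the cost of an \emph{additional} factor $e^{-V_{ij}}$ on every tree edge where $V_{ij}<0$, i.e.\ every edge in $E_\t\setminus E_\t^+$. Collecting these extra factors yields
\[
\prod_{\{i,j\}\in E_\t}\!|e^{-V_{ij}}-1|\;=\;e^{-\sum_{\{i,j\}\in E_\t\setminus E_\t^+}V_{ij}}\prod_{\{i,j\}\in E_\t}\!\bigl(1-e^{-|V_{ij}|}\bigr).
\]

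Finally I would merge this exponential with the Penrose exponential. Since $E_\t^+\subset E_\t\subset E_{\bm{M}(\t)}$ and the two index sets $E_{\bm{M}(\t)}\setminus E_\t$ and $E_\t\setminus E_\t^+$ are disjoint with union equal to $E_{\bm{M}(\t)}\setminus E_\t^+$, the two exponentials combine into the single factor $\exp\bigl(-\b\sum_{\{i,j\}\in E_{\bm{M}(\t)}\setminus E_\t^+}V(x_i-x_j)\bigr)$ appearing in (\ref{tre}). The only genuinely delicate point is this sign analysis on tree edges; once it is carried out the partition of the exponent falls out of elementary set algebra and the corollary follows directly from Theorem \ref{Penid}.
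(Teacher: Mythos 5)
Your proposal is correct and follows essentially the same route as the paper: apply the triangle inequality to the Penrose identity, use the sign-dependent rewriting $|e^{-V_{ij}}-1|=e^{-V_{ij}}\bigl(1-e^{-|V_{ij}|}\bigr)$ on edges with $V_{ij}<0$ (the trick the paper attributes to \cite{errata}), and merge the resulting exponential with the Penrose factor over the disjoint union $E_{\bm{M}(\t)}\setminus E_\t^+=(E_{\bm{M}(\t)}\setminus E_\t)\uplus(E_\t\setminus E_\t^+)$. No gaps.
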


\\{\bf Proof}.  Let us pose once again  $V_{ij}=\b V(x_i-x_j)$. By Theorem \ref{Penid} we have that
\be\label{uno}
\left|\sum_{g\in {\mathcal G}_n} \prod_{\{i,j\}\in E_g}\left(e^{- V_{ij}}-1\right)\right|~\le ~
\sum_{\t\in {\mathcal T}_n} \prod_{\{i,j\}\in E_\t}\left|e^{- V_{ij}}-1\right|
 e^{-  \sum_{\{i,j\}\in E_{\Mi(\t)}\setminus E_\t}V_{ij}}
\ee
Using now the trick  proposed in
\cite{errata}  we observe that, for any $\t\in \mathcal T_n$
$$
 \prod_{\{i,j\} \in E_\tau} |e^{-V_{ij}}-1|=\Big[ \prod_{\{i,j\} \in E_\tau} (1-e^{- |V_{ij}|}) \Big] \
 e^{- \sum_{\{i,j\}\in E_\t\setminus E^+_\t}V_{ij}}
$$
 so that
\be\label{due}
\left[\prod_{\{i,j\}\in E_\t}\left|e^{- V_{ij}}-1\right|\right]
 e^{-  \sum_{\{i,j\}\in E_{\Mi(\t)}\setminus E_\t}V_{ij}}~=~
\left[\prod_{\{i,j\} \in E_\tau} (1-e^{-| V_{ij}|})\right]
 e^{- \sum_{\{i,j\} \in E_{\bm M(\tau)} \setminus E^+_\tau}V_{ij}}
\ee
Inserting now (\ref{due}) into (\ref{uno}) we get (\ref{tre}). $\Box$

\subsection{Partition scheme via minimum spanning tree}\label{asas}
The key point now is  to find a partition scheme $\bm M$ in such a way that is possible to find a good bound for the factor
$
\exp\{-\b\sum_{\{i,j\}\in E_{\bm{M}(\t)}\backslash E^+_\t}V(x_i-x_j)\}
$
appearing in the r.h.s. of (\ref{tre}).

\\We will construct explicitly our  partition scheme $\bm M: {\mathcal T}_n\to {\mathcal G}_n$  by first defining an auxiliary map
$\bm T :{\mathcal G}_n\to {\mathcal T}_n$ and then deriving $\bm M$ from $\bm T$ according to  the following proposition.

\begin{pro}\label{twomaps}
The following statements are equivalent.
\begin{itemize}
\item[1.]
There are  two maps $$\xymatrix{{\mathcal G}_n \ar@<.5ex>[r]^\Ti & {\mathcal T}_n \ar@<.5ex>[l]^\Mi}$$
such that $\Ti^{-1}(\tau)=\{g \in {\mathcal G}_n:\, \tau \subset g \subset \Mi(\tau)\}$ for every $\tau \in {\mathcal T}_n$.
\item[2.]
 $\Mi$ is a partition scheme in ${\mathcal G}_n$.
 \end{itemize}
\end{pro}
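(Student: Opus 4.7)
The plan is to exploit the fact that specifying a function $\bm T:\mathcal{G}_n\to\mathcal{T}_n$ is equivalent to partitioning $\mathcal{G}_n$ by labels in $\mathcal{T}_n$: the fibers of any function always partition its domain, and conversely any partition indexed by $\mathcal{T}_n$ gives rise to a well-defined function sending each element to the index of the unique block containing it. Once this dictionary is invoked, both implications of Proposition \ref{twomaps} become essentially tautological, modulo one small consistency check concerning how $\bm T$ acts on trees.

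For (1)$\Rightarrow$(2), I would first verify the condition $\tau\subset\bm M(\tau)$ appearing in Definition \ref{partschem}. Since every tree $\tau\in\mathcal{T}_n$ is also a connected graph on $[n]$, we have $\tau\in\mathcal{G}_n$, and therefore $\tau$ lies in some fiber $\bm T^{-1}(\tau')=[\tau',\bm M(\tau')]$. This gives $\tau'\subset\tau\subset\bm M(\tau')$. Because $\tau$ and $\tau'$ are both spanning trees of $[n]$ they have exactly $n-1$ edges each, so the inclusion $\tau'\subset\tau$ forces $\tau=\tau'$; in particular $\bm T(\tau)=\tau$ and $\tau\subset\bm M(\tau)$. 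The disjoint union property $\mathcal{G}_n=\biguplus_{\tau\in\mathcal{T}_n}[\tau,\bm M(\tau)]$ then follows at once: the fibers of $\bm T$ partition $\mathcal{G}_n$, and by the hypothesis in (1) these fibers are precisely the intervals $[\tau,\bm M(\tau)]$.

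For (2)$\Rightarrow$(1), the partition scheme hypothesis says that every $g\in\mathcal{G}_n$ belongs to exactly one interval $[\tau,\bm M(\tau)]$. I would use this to define $\bm T:\mathcal{G}_n\to\mathcal{T}_n$ by sending $g$ to that unique $\tau$; the map is well defined precisely because the union in Definition \ref{partschem} is disjoint. By construction $\bm T^{-1}(\tau)=[\tau,\bm M(\tau)]$ for every $\tau\in\mathcal{T}_n$, which is exactly what (1) requires.

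There is no real obstacle here; the only step that is not purely formal is the short cardinality argument showing $\bm T(\tau)=\tau$ for each tree $\tau$, which is forced because two spanning trees on $[n]$ comparable by inclusion must coincide.
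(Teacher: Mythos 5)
Your argument is correct and follows essentially the same route as the paper's own proof: for (1)$\Rightarrow$(2) you show $\bm T(\tau)=\tau$ by the same comparability-of-trees observation (the paper phrases it as $\bm T(g)\subset g$ for all $g$, specialized to trees), and for (2)$\Rightarrow$(1) you define $\bm T$ by the unique interval containing each graph, exactly as the paper does. No gaps.
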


\\{\it Proof}. $1\Rightarrow2$.
Since $g \in \Ti^{-1}(\Ti(g))$, we have $\Ti(g) \subset g$ for all $g \in {\mathcal G}_n$. In particular,
for every tree $\tau$ we have $\Ti(\tau) \subset \tau$ which implies
$\Ti(\tau)=\tau$ because both are trees.  I.e.,  $\Ti$ is  surjective and thus the  intervals $\Ti^{-1}(\tau)$ are nonempty. This implies that
 ${\mathcal G}_n$ is the disjoint union of the intervals
 ${\mathcal G}_n = \bigcup_{\tau \in {\mathcal T}_n} \Ti^{-1}(\tau)=\bigcup_{\tau \in {\mathcal T}_n}[\t, \Mi(\t)]$.
 Hence in view
 of Definition \ref{partschem} we conclude  that  $\Mi$ is a partition scheme in ${\mathcal G}_n$.

 \\$2\Rightarrow1$. If
  $\Mi$ is a partition scheme then for any $g\in {\mathcal G}_n$ there exists a unique tree $\t\in {\mathcal T}_n$
  such that  $g\in [\t,\bm M(\t)]$. Therefore we can define the map $\bm T$ from ${\mathcal G}_n$ to ${\mathcal T}_n$
  such that, for all $g\in [\t,\bm M(\t)]$,  $\bm T(g)=\t$.
  $\Box$

\vv\vv
\\ We thus start
by first defining the map $\bm T$  from ${\mathcal G}_n$ to ${\mathcal T}_n$. In order to do that,
assume we have a function defined on the edges of the complete graph, $f: E_n \to {\mathbb R}^*$. Then, for every connected graph $g \in {\mathcal G}_n$
there is at least a tree $\tau \subset g$, among the trees $\tau' \subset g$,  which minimizes the value of $\sum_{e \in E_{\tau'}} f(e)$.
This tree is called a minimum spanning tree of $g$ w.r.t. $f$. If $f$ is such that this minimum spanning tree is unique for
each graph $g \in {\mathcal G}_n$,
then $f$ induces a map, say $\bm T_f$,  from ${\mathcal G}_n$ to ${\mathcal T}_n$ which associates to each $g \in {\mathcal G}_n$
this unique minimum spanning tree.


\\ Let now  $V$ be a stable and tempered pair potential in ${\mathbb{R}}^{dn}$ and let
$(x_1,\dots,x_n)\in \mathbb{R}^{dn}$ be given.
We would like to  consider as  the function $f: E_n \to  {\mathbb R}^*$ the one that associates to each $\{i,j\}\in E_n$ the
value $V(x_i-x_j)$. The problem is that such $f$, depending on the potential $V$ and the configuration
$(x_1,\dots,x_n)\in \mathbb{R}^{dn}$ chosen, in general does not guarantee that each $g\in {\mathcal G}_n$ contains a unique
minimum spanning tree w.r.t. to $f$.
In order to avoid multiple minima, we will  therefore modify the codomain of the function $f$ by adding $n(n-1)/2$ auxiliary ``coordinates"
(which will take only integers values), each of them one-to-one associated to an edge of  $E_n$. These coordinates
will permit to distinguish eventual multiple minimum trees.

\\To carry out this procedure, we need to
widen our framework and consider, instead of functions from $E_n$ to $\mathbb{R}^*$, a suitable class of more general
functions $f:E_n\to {\mathbb K}$ with  ${\mathbb K}$ being a totally ordered Abelian
monoid and having the key property that  $\sum_{e \in E_{\tau}} f(e)$ is different
for different trees $\tau \in {\mathcal T}_n$. We recall that a {\it totally ordered Abelian
monoid} (shortly {\it tomonid})  is a structure
 $({\mathbb K}, + ,0, \ge)$ such that $({\mathbb K}, + ,0)$
is an Abelian  (i.e. commutative) monoid, $({\mathbb K}, \ge)$ is a
totally ordered set  and for all $x, y, z\in  {\mathbb K}$
we have that $x\ge y$ implies $x + z \ge y + z$  (i.e. the total order $\ge $ is translational invariant).

\begin{defi} \label{fadm}
  Let $f: E_n \to {\mathbb K}$ where ${\mathbb K}$ is a totally ordered Abelian monoid. We say that $f$ is admissible if
  for any $\t,\t'\in {\mathcal T}_n$ such that $\t\neq \t'$ we have that $\sum_{e \in E(\tau)} f(e)\neq \sum_{e \in E(\tau')} f(e)$.
\end{defi}

\\Once an admissible function $f: E_n \to {\mathbb K}$  has been given, we can  define the
maps $\bm T_f$  from ${\mathcal G}_n$ to ${\mathcal T}_n$
$\bm M_f$  from ${\mathcal T}_n$ to ${\mathcal G}_n$ as follows.

\begin{defi}\label{hyp}
Let $\mathbb{K}$ be  a totally ordered Abelian monoid. Let $f: E_n \to {\mathbb K}$ be an admissible function.
  Then, for every $g \in {\mathcal G}_n$ there is a unique  spanning tree $\t \subset g$
  for which $\sum_{e \in E(\tau)} f(e)$ is minimum. We define the map $\bm T_f: \GG_n\to \TT_n$ such that
  $\bm T_f(g)$ is this unique minimum spanning tree of $g$.
\end{defi}


\begin{defi}\label{d4}
Let $\mathbb{K}$ be  a totally ordered Abelian monoid and let $f: E_n \to {\mathbb K}$ be an admissible  function.
We define the map $\bm M_f:{\mathcal T}_n\to{\mathcal G}_n$ such that  $\bm M_f( \tau )$ is the graph on the vertices $[n]$ whose edges are
 the $\{i,j\}$ such that $f(\{i,j\}) \geq f(e)$ for every edge $e \in E_\tau$ belonging to the path  from $i$ to $j$ through $\tau$.
\end{defi}


\\Thus we have constructed
$$\xymatrix{{\mathcal G}_n \ar@<.5ex>[r]^{\bm T_f} & {\mathcal T}_n \ar@<.5ex>[l]^{\bm M_f}}$$
Observe that $\tau \subset \bm M_f(\tau)$ and $\bm T_f(g) \subset g$.
Moreover, these two maps ${\bm T_f}$, ${\bm M_f}$ satisfy the hypothesis of the Proposition \ref{twomaps}. This is the content of the following lemma.



\begin{lem}\label{l1}
  Let $f: E_n \to  {\mathbb K}$ be an admissible function and $\tau \in {\mathcal T}_n$. Let $\bm T_f$ and $\bm M_f$ be the maps given in Definitions
\ref{hyp} and \ref{d4} respectively. Then
 $$\bm T_f^{-1}(\tau)=\{g \in {\mathcal G}_n :\, \tau \subset g \subset \bm M_f(\tau)\}$$
 and therefore $\bm M_f$  is a partition scheme in ${\mathcal G}_n$.
\end{lem}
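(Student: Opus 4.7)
The plan is to prove the set equality $\bm T_f^{-1}(\tau)=\{g\in{\mathcal G}_n:\,\tau\subset g\subset \bm M_f(\tau)\}$ by establishing both inclusions; the conclusion that $\bm M_f$ is a partition scheme then follows directly from Proposition \ref{twomaps}.

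For the inclusion ``$\subseteq$'', I would assume $\bm T_f(g)=\tau$ and show $g\subset \bm M_f(\tau)$ (the containment $\tau\subset g$ being built into Definition \ref{hyp}). Given any $\{i,j\}\in E_g\setminus E_\tau$ and any edge $e$ on the path from $i$ to $j$ in $\tau$, the graph $\tau':=(\tau\setminus\{e\})\cup\{\{i,j\}\}$ is again a spanning tree of $g$, distinct from $\tau$. Because $\tau$ is the unique minimum spanning tree and $f$ is admissible, $\sum_{e''\in E_\tau}f(e'')<\sum_{e''\in E_{\tau'}}f(e'')$. The two sums agree on $E_\tau\cap E_{\tau'}$ and differ only on the pair $e$ versus $\{i,j\}$, so translation invariance together with totality of the tomonid order forces $f(e)<f(\{i,j\})$, i.e.\ $f(\{i,j\})\ge f(e)$. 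Since this holds for every $e$ on the $i$-to-$j$ path in $\tau$, Definition \ref{d4} places $\{i,j\}$ in $E_{\bm M_f(\tau)}$.

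For the inclusion ``$\supseteq$'', I would suppose $\tau\subset g\subset \bm M_f(\tau)$ and prove that $\tau$ is the minimum spanning tree of $g$. Let $\tau'\in{\mathcal T}_n$ be any spanning tree of $g$. Invoking the standard symmetric exchange property of the graphic matroid, there is a bijection $\phi:E_\tau\setminus E_{\tau'}\to E_{\tau'}\setminus E_\tau$ such that, for each $e\in E_\tau\setminus E_{\tau'}$, the edge $\phi(e)\in E_{\tau'}$ closes a cycle through $e$ in $\tau$; equivalently, $e$ lies on the path in $\tau$ between the endpoints of $\phi(e)$. Because $\phi(e)\in E_g\subset E_{\bm M_f(\tau)}$, Definition \ref{d4} yields $f(\phi(e))\ge f(e)$. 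Adding these inequalities and the common contributions from $E_\tau\cap E_{\tau'}$, translation invariance gives $\sum_{e''\in E_{\tau'}}f(e'')\ge \sum_{e''\in E_\tau}f(e'')$, and admissibility upgrades this to a strict inequality whenever $\tau'\neq\tau$. Hence $\tau$ is the (unique) minimum spanning tree of $g$, so $\bm T_f(g)=\tau$.

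The main obstacle is transplanting the familiar weighted-graph argument into the tomonid setting: translation invariance only delivers non-strict implications, so at every place where one ordinarily writes ``strictly smaller weight'' I must combine totality of the order with the admissibility hypothesis to produce the strict inequality, and I must check that the matroid exchange step survives when the edge weights lie in a general totally ordered Abelian monoid rather than in $\mathbb{R}$.
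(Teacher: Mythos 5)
Your proof is correct. The first inclusion is argued exactly as in the paper: the single-edge exchange $\tau\mapsto(\tau\setminus\{e\})\cup\{\{i,j\}\}$ produces another spanning tree of $g$, and minimality of $\tau$ gives $f(e)\le f(\{i,j\})$ for every edge $e$ on the $\tau$-path from $i$ to $j$ (the paper gets the non-strict inequality directly from minimality and does not need admissibility at this step, but your strict version is also fine). For the reverse inclusion you take a genuinely different route. The paper argues by contradiction on $\bm T_f(g)$: if some $\{i,j\}\in E_{\bm T_f(g)}\setminus E_\tau$ existed, deleting it would split $\bm T_f(g)$ into two components, some edge $e$ of the $\tau$-path from $i$ to $j$ would reconnect them, and since $g\subset\bm M_f(\tau)$ forces $f(e)\le f(\{i,j\})$ the swapped tree would be at least as good, contradicting (via admissibility) the strict minimality of $\bm T_f(g)$. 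You instead compare $\tau$ directly with an arbitrary spanning tree $\tau'$ of $g$ by means of the symmetric basis-exchange bijection of the graphic matroid (Brualdi's theorem) and sum the edgewise inequalities $f(\phi(e))\ge f(e)$. This is a clean global argument that would generalize verbatim to any matroid, but it imports a nontrivial external theorem where the paper's single-swap contradiction is self-contained and needs only the elementary fact that a deleted tree edge can be replaced by some edge of the corresponding fundamental path. Your handling of strictness --- using totality and translation invariance of the tomonoid order to cancel the common contribution and then invoking admissibility to upgrade non-strict to strict inequalities --- is precisely the point requiring care in this setting, and you treat it correctly.
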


\begin{proof}
 Let $g \in \bm T_f^{-1}(\tau)$. We have $\tau=\bm T_f(g) \subset g$. Now take $\{i,j\} \in E_g$, and let $e \in E_\tau$
 be any edge belonging to the path from $i$ to $j$ in $\tau$.
 Consider $\tau'$ the graph obtained from $\tau$ after replacing the edge $e$ by $\{i,j\}$. Clearly $\tau'$ is connected and has $n-1$ edges,
 so it is a tree. By minimality of $\tau$ we must have $f(d) \leq f(\{i,j\})$, whence $\{i,j\} \in E_{\bm M_f(\tau)}$. Therefore $g \subset \bm M_f(\tau)$.

 \\Conversely, let $\tau \subset g \subset \bm M_f(\tau)$. We must show $\bm T_f(g)=\tau$. By cardinality, it suffices to show  $\bm T_f(g) \subset \tau$.
 Proceeding by contradiction, take $\{i,j\} \in E_{\bm T_f(g)} \setminus E_\tau$. Consider the path $p^\t(\{i,j\})$ in $\tau$ joining $i$ with $j$.
   Since $\bm T_f(g) \subset \bm M_f(\tau)$, $f(\{i,j\})$ is greater or equal than the corresponding value for any
 edge in the path $p^\t(\{i,j\})$. If we remove $\{i,j\}$ from $\bm T_f(g)$, the tree splits into two trees. Necessarily, at least one of the edges in
 the path $p^\t(\{i,j\})$ joins a vertex of one tree with a vertex of the other. Thus, by adding this edge we obtain a connected graph with $n-1$ edges, a new tree,
 which contradicts the minimality of $\bm T_f (g)$.


\end{proof}
\vv

\\We now provide an explicit construction for the admissible function $f$ of Definition \ref{hyp}.

\\We start by specifying the totally ordered Abelian monoid $\mathbb{K}$.
Assume to have chosen an order in $E_n$ (e.g. the lexicographic order: $\{i,j\}<\{i',j'\}$
if either $i<i'$ or $i=i'$ and $j<j'$). Let $ \mathbb{N}_0=\mathbb{N}\cup\{0\}$
and consider the set
$$
{\mathbb N}_0^{E_n}=\overbrace{{\mathbb N}_0\times \cdots \times \mathbb N_0}^{|E_n|~{\rm  times}}
$$
such that the $m^{\rm th}$ entry (with $1\le m\le |E_n|$)
of an element $x\in {\mathbb N}_0^{E_n}$ corresponds to the $m^{\rm th}$ edge  w.r.t. the order chosen  of $E_n$. Now we set
\be\label{tomonoid}
{\mathbb{K}}\doteq {\mathbb R}^*\times {\mathbb N}_0^{E_n}
\ee
 An element of ${\mathbb{K}}$ is thus an ordered
$(|E_n|+1)$-tuple such that the first entry is a real number while the remaining $|E_n|$ entries are natural numbers or zero.

\\The set ${\mathbb{K}}$  defined in (\ref{tomonoid}) has a canonical structure of Abelian monoid and
can  also be endowed with a natural total order by
considering the lexicographical order on $\mathbb{R}^*\times \mathbb{N}_0^m$,
prioritizing the entries from left to right. It is easy to show that such an order is translational invariant respect to the
standard sum in ${\mathbb{K}}$ which is thus a totally ordered Abelian monoid.
\vv
\\We are now ready to define the function $f:E_n\to \mathbb{K}$.
Let  $\bm 1_{\{i,j\}}$ be the element of ${\mathbb N}_0^{E_n}$  with all entries zero except the one at the position
corresponding to the edge $\{i,j\}$ which is equal to one.

\begin{defi}\label{fV}
 Given a pair potential  $V$ and given $(x_1,\dots,x_n)\in  \mathbb{R}^{dn}$,  define the
function \newline
\be\label{effe}
f: ~E_n \to {\mathbb{K}}:~ \{i,j\}\mapsto V(x_i-x_j)\times \bm 1_{\{i,j\}}
\ee
\end{defi}

\\{\bf Remark}. The function  $f$  above is, for every pair potential $V$ and for any $(x_1,\dots,x_n)\in  \mathbb{R}^{dn}$, admissible
according to Definition \ref{hyp} and thus, according to Definition \ref{d4} and Lemma \ref{l1}, the map  $\bm M_f:{\mathcal T}_n\to{\mathcal G}_n$ is a
partition scheme in ${\mathcal G}_n$. This is our new partition scheme. The advantage to consider this new partition $
\bm M_f$ scheme instead of the one originally proposed
by Penrose is manifestly clear in the following key lemma which practically  concludes the proof of Proposition \ref{pro1}.



\vv


\begin{lem}\label{stabp}
 Let $V: {\mathbb R}^d \to {\mathbb R}^*$ be a stable pair potential with stability constant $B$, and $\tau \in {\mathcal T}_n$.
 Then, for every $(x_1,...,x_n) \in {\mathbb R}^{dn}$,
\be\label{Peb}
 \sum_{\{i,j\} \in \bm E_{M_f(\tau)} \setminus E_\tau^+} V(x_i-x_j) \geq -Bn
\ee
where $f$ is the function given in Definition \ref{fV}, $\bm M_f(\tau)$ is the graph given in Definition \ref{d4} and $ E_\tau^+$ is the  subset
of $E_\t$ defined in (\ref{t+}).

\end{lem}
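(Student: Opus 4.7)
The plan is to leverage the minimum-spanning-tree structure of $\bm M_f(\tau)$ to decompose the sum in (\ref{Peb}) according to a natural forest obtained from $\tau$, then apply the stability bound componentwise.

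Concretely, I will introduce the forest $F\subset\tau$ whose edges are $E_\tau\setminus E_\tau^+$ (the strictly negative-energy tree edges) and write $C_1,\dots,C_k\subset[n]$ for the vertex sets of its connected components, so $\sum_l|C_l|=n$. The reduction target is: (i) every edge of $E_{\bm M_f(\tau)}\setminus E_\tau^+$ joining two different $C_l$'s carries $V\ge 0$ and hence contributes non-negatively; and (ii) for each $l$,
$$
\sum_{\substack{\{i,j\}\in E_{\bm M_f(\tau)}\setminus E_\tau^+ \\ i,j\in C_l}} V(x_i-x_j) \;\ge\; -B|C_l|.
$$
Summing over $l$ then yields (\ref{Peb}).

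The two structural facts I will establish are: \textbf{(A)} if $\{i,j\}\in E_{\bm M_f(\tau)}\setminus E_\tau^+$ and $V(x_i-x_j)<0$, then $i,j$ lie in a common $C_l$; and \textbf{(B)} for every $i,j\in C_l$ with $V(x_i-x_j)\ge 0$, the edge $\{i,j\}$ automatically belongs to $E_{\bm M_f(\tau)}\setminus E_\tau^+$. For (A), the case $\{i,j\}\in E_\tau$ is immediate (the edge lies in $F$); otherwise, Definition~\ref{d4} together with the lexicographic order on $\mathbb K$ forces $V(e)\le V(x_i-x_j)<0$ for every $e$ on the $\tau$-path from $i$ to $j$, so the entire path sits in $F$. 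For (B), such an $\{i,j\}$ cannot be a tree edge (otherwise, combined with the all-negative $F$-path joining $i$ to $j$, it would close a cycle in $\tau$); since the $\tau$-path from $i$ to $j$ is precisely that all-negative $F$-path, one has $V(e)<0\le V(x_i-x_j)$ for each $e$ on it, so $f(\{i,j\})>f(e)$ strictly in the first coordinate and thus $\{i,j\}\in E_{\bm M_f(\tau)}$. Given (A) and (B), the edges of $\binom{C_l}{2}$ absent from $E_{\bm M_f(\tau)}\setminus E_\tau^+$ must all carry $V<0$, so their removal only raises the sum; stability applied to the sub-configuration $(x_i)_{i\in C_l}$ then gives $\sum_{\{i,j\}\subset C_l}V(x_i-x_j)\ge -B|C_l|$, yielding (ii).

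The principal technical hurdle is Claim~(B): it is precisely the MST-based choice of $\bm M_f$ that guarantees a non-negative potential automatically dominates every edge along an all-negative tree path and hence enters $\bm M_f(\tau)$. This feature, unavailable for Penrose's original partition scheme, is what allows the argument to bypass the hard-core hypothesis and deliver the improved bound on the radius of convergence $R$.
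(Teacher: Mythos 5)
Your proof is correct and takes essentially the same route as the paper's: the same forest $E_\tau\setminus E_\tau^+$ with vertex components, the same two structural observations about which edges of $E_{\bm M_f(\tau)}\setminus E_\tau^+$ cross components and which intra-component pairs are forced into $\bm M_f(\tau)$ (the paper states these in contrapositive form), and the same application of stability to each sub-configuration $(x_i)_{i\in C_l}$. Your version merely makes explicit a few details (e.g., that an intra-component pair cannot lie in $E_\tau^+$) that the paper leaves implicit.
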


\begin{proof}
 In the following, $V_{ij}$ denotes $V(x_i - x_j)$, and we will make implicit use of the following trivial fact:
 $$
 (x,\sigma) \geq (x',\sigma') \Rightarrow x \geq x' \ \ \mbox{ for } (x,\sigma), (x',\sigma') \in \mathbb{R}^*\times{\mathbb N}_0^{E_n}
 $$
Now we proceed to show that the inequality (\ref{Peb}) holds true.
 The set of edges $E_\t\setminus E_\tau^+$ forms  the forest $\{\tau_1,...,\tau_k\}$.
 Let us denote ${\mathcal V}_{\tau_s}$ the vertex set of the  tree $\tau_s$
 of the forest.
Assume $i \in {\mathcal V}_{\tau_a}$, $j \in {\mathcal V}_{\tau_b}$.
 If $a \neq b$, the path from $i$ to $j$ through $\tau$ involves an edge $e$ in $E_\tau^+$. Thus, if in addition $\{i.j\} \in
E_{\bm M_f(\tau)}$, we
 have $V_{ij} \geq V_e \geq 0$. If $a=b$, the path from $i$ to $j$ through $\tau$ is contained in $\tau_a$. Thus, if in addition
 $\{i,j\} \notin E_{\bm M_f(\tau)}$, we must have $V_{ij} \leq V_e \leq 0$ for some edge $e$ in that path. This allows to bound:

 $$\sum_{\{i,j\} \in E_{\bm M_f(\tau)} \setminus E_\tau^+} V_{ij} \geq \sum_{s=1}^k \sum_{\{i,j\} \subset {\mathcal V}_{\tau_s}} V_{ij} \geq
 \sum_{s=1}^k -|\mathcal V_{\tau_s}|B \ge -nB  ~~~~~~~~~~~~~~~~~~\Box$$

\end{proof}

\subsection{Conclusion of the proof of Proposition \ref{pro1}}

We now have all ingredients to conclude the proof of Proposition  \ref{pro1}. Indeed,
for any fixed configuration $(x_1,\dots,x_n)\in \mathbb{R}^{dn}$,   by Corollary   \ref{cor1} and Lemma \ref{l1} we have

 $$
 \Big| \sum_{g \in {\mathcal G}_n} \prod_{\{i,j\} \in E_g} (e^{-\b V(x_i-x_j)}-1) \Big| \leq
 \left[ \prod_{\{i,j\} \in E_\tau} (1-e^{-\b| V(x_i-x_j)|})\right]
 e^{-\b \sum_{\{i,j\} \in E_{\bm M_f(\tau)} \setminus E^+_\tau}V(x_i-x_j)}~\le $$
 $$~~~~~~~ \le e^{\b Bn}\sum_{\tau \in {\mathcal T}_n} \prod_{\{i,j\} \in \tau} (1-e^{-\b| V(x_i-x_j)|})
 $$
where $f$ is the admissible function given in (\ref{effe}),  $\bm M_f$ is the partition scheme given  in Definition \ref{d4} and
where to get  the inequality of the last line we have used Lemma
\ref{stabp}.
\section{Proof of Theorem \ref{teo1}}
To deduce Theorem \ref{teo1} from Proposition \ref{pro1}
we just need to show the following easy lemma.
\begin{lem}\label{le3}
For any  $\t\in T_n$ it holds
$$
\int_{\L}dx_1\dots \int_{\L}dx_{n} \prod_{\{i,j\}\in E_{\t}}
\left(1-e^{-\b |V(x_i-x_j)|}\right)\le
|\L|\left[\hat C(\b)\right]^{n-1}
$$
\end{lem}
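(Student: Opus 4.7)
The plan is to prove Lemma 3 by induction on $n$, integrating out a leaf of $\tau$ at each step. This is a standard ``tree integration'' trick, and the admissibility hypothesis on the partition scheme is no longer needed here; we just use the tree structure.

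For the base case $n=1$ (or alternatively $n=2$) the inequality reduces to a trivial check: with $n=1$ there are no edges, the integrand is $1$, and $|\Lambda|\cdot [\hat C(\b)]^{0}=|\Lambda|$.

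For the inductive step with $n\ge 2$, I would fix a leaf $\ell$ of $\tau$ with unique neighbor $p$, and let $\tau'$ be the tree on $[n]\setminus\{\ell\}$ obtained by deleting $\ell$ together with the edge $\{\ell,p\}$. Since $\ell$ appears in exactly one edge of $\tau$, the integrand factorizes as
\be
\prod_{\{i,j\}\in E_\tau}\bigl(1-e^{-\b|V(x_i-x_j)|}\bigr)
=\bigl(1-e^{-\b|V(x_\ell-x_p)|}\bigr)\prod_{\{i,j\}\in E_{\tau'}}\bigl(1-e^{-\b|V(x_i-x_j)|}\bigr).
\ee
Then I would perform the integral over $x_\ell$ first, with all other $x_k$ held fixed. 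By the translation $y=x_\ell-x_p$ (Jacobian $1$) and the fact that the integrand is non-negative,
\be
\int_\Lambda dx_\ell\,\bigl(1-e^{-\b|V(x_\ell-x_p)|}\bigr)
\le \int_{\mathbb{R}^d} dy\,\bigl(1-e^{-\b|V(y)|}\bigr)=\hat C(\b),
\ee
uniformly in $x_p$. Plugging this bound into the previous display and applying the inductive hypothesis to $\tau'$ on $[n]\setminus\{\ell\}$ yields
\be
\int_{\Lambda^{n}} \prod_{\{i,j\}\in E_\tau}\bigl(1-e^{-\b|V(x_i-x_j)|}\bigr)\,dx_1\cdots dx_n
\le \hat C(\b)\cdot |\Lambda|\,[\hat C(\b)]^{n-2}=|\Lambda|\,[\hat C(\b)]^{n-1},
\ee
as required.

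There is no real obstacle; the only points worth checking are that (i) every tree on $n\ge 2$ vertices has at least one leaf, so the inductive step always applies, and (ii) the non-negativity of $1-e^{-\b|V(x)|}$ is what allows us to enlarge the single integration domain from $\Lambda$ to $\mathbb{R}^d$ without changing the sign of the bound. The finiteness of $\hat C(\b)$ itself is guaranteed by the temperedness assumption on $V$ (already recorded in the paper).
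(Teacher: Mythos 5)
Your proof is correct and is essentially the paper's argument in inductive clothing: the paper orders the tree edges as $\{k,j_k\}$ with $j_k<k$ and performs the triangular change of variables $y_k=x_k-x_{j_k}$ in one shot, which amounts to exactly your leaf-by-leaf integration, followed by the same enlargement of each integration domain to $\mathbb{R}^d$ using non-negativity of the integrand. No substantive difference.
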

\def\xx{x}
\\{\bf Proof}.
Without loss of generality  we can always assume  (eventually through a renomination of the
indices in the integral of the l.h.s. of equation  above)  that  $E_{\t}~=~\{2,j_2\},\{3,j_3\},\dots \{n,j_{n}\}$ with $j_k<k$ for all
$k=2,\dots, n$, so that
$$
\int_{\L}d\xx_1\dots \int_{\L}d\xx_{n} \prod_{\{i,j\}\in E_{\t}}
\left( 1- e^{ -\b |V(\xx_i -\xx_j)|}\right)~
=~
\int_{\L}d\xx_1\dots \int_{\L}d\xx_{n} \prod_{k=2}^{n}
\left(1-e^{ -\b |V(\xx_{k} -\xx_{j_k})|}\right )
$$

\\Define the following change of variables $(x_1,\dots,x_n) \to (y_1,\dots, y_n)$ in the integral
\be\label{change}
y_1 =x_1, ~~~~y_{k}= x_{k}-x_{j_k}, ~~~~~~\forall k =2,\dots ,n
\ee
The Jacobian matrix   of such transformation (\ref{change}), being lower triangular with entries of the diagonal all  equal to one,  has determinant
equal to one.
Therefore we have
$$
\int_{\L}d\xx_1\dots\int_{\L}d\xx_{n} \prod_{k~=~1}^{n-1} \left(1-e^{ -\b |V(\xx_{i_k} -\xx_{j_k})|}\right ) ~\le
$$
$$
\le~
\int_{\L}dy_1\int_{\mathbb{R}^3}dy_2\dots
\int_{\mathbb{R}^3}dy_{n}
 \prod_{j=2}^{n}
\left( 1- e^{ -\b |V(y_j)|}\right )~=
$$
$$
=~
|\L|\left[\int_{\mathbb{R}^3}(1-   e^{ -\b |V(x)|})dx\right]^{n-1}~= |\L|\left[\hat C(\b)\right]^{n-1}
$$
$\Box$
\vv

\\Theorem \ref{teo1} follows now straightforwardly. Indeed, from Proposition \ref{pro1} and Lemma \ref{le3} we get
 the following bound for the  absolute
value of the Mayer coefficient $C_n(\b,\L)$.
$$
|C_n(\b,\L)|~\le ~{1\over |\L|}{1\over n!}\int_{\L}\,dx_1
\dots \int_{\L} dx_n\: \left|\sum\limits_{g\in {\mathcal G}_{n}}
\prod\limits_{\{i,j\}\in E_g}\left[  e^{ -\b V(x_i-x_j)|} -1\right]\right|\le
$$
$$
~~~~~~~~~~~~~~\le ~{e^{\b Bn}\over |\L|}{1\over n!}\int_{\L}\,dx_1
\dots \int_{\L} dx_n\: \sum_{\tau \in {\mathcal T}_n} \prod_{\{i,j\} \in E_\tau} (1-e^{-\b| V(x_i-x_j)|}) =
$$
$$
~~~~~~~~~~~~~~= ~{e^{\b Bn}\over |\L|}{1\over n!}\sum_{\tau \in {\mathcal T}_n}\int_{\L}\,dx_1
\dots \int_{\L} dx_n\:  \prod_{\{i,j\} \in E_\tau} (1-e^{-\b| V(x_i-x_j)|}) \le
$$
$$
~~~~~~\le ~{e^{\b Bn}}{1\over n!}\left[\hat C(\b)\right]^{n-1} \sum_{\tau \in {\mathcal T}_n} 1~ =~{e^{\b Bn}}{n^{n-2}\over n!}\left[\hat C(\b)\right]^{n-1}
~$$
where in the last line we have also used the Cayley formula $\sum_{\tau \in {\mathcal T}_n} 1=|\mathcal T_n|=n^{n-2}$
(see \cite{Ca}).
This concludes the proof of Theorem \ref{teo1}.
\section{Comparison with recent results and concluding remarks}

As observed in Section 2, the improvement of Theorem \ref{teo1} respect to the old Penrose-Ruelle bound is
manifestly evident, as far as general stable and tempered pair potentials are concerned. There have been however recent works \cite{MPS, dLP, dLPY}
which also obtain improvements on the convergence radius of the Mayer series for systems of particles interacting
via a restricted class of stable and tempered pair potentials. In particular,  in a recent paper \cite{dLPY}, which is a development of an early work by Basuev \cite{Ba2},
the authors obtain new bounds for the Mayer series convergence radius which are better than any precedent  bound given in the
literature (see in \cite{dLPY} the new bounds of  Theorem 5, formula (4.5)), as far as  the particles in the  system interact via a so-called Basuev potential (see Definition 3 in \cite{dLPY}). This
is a class of potentials which includes the classical case of the Lennard-Jones type.

\\It is worth to say that, at least for the particular case
of the Lennard-Jones potential, the optimal bounds presented in \cite{dLPY} (obtained
also using a recent result  by one of us \cite{Y}) are slightly better than those obtained here.

\\Indeed, as far as the Lennard-Jones potential $V(r)={1\over r^{12}}-{2\over r^{6}}$ is concerned, the calculations given in Sec 5.3 of \cite{dLPY} show that, at $\b=1$
the lower bound for the convergence radius of the  Mayer series of the Lennard-Jones gas is surely greater   than
$$
 {e^{-({B_{_{\rm LJ}}}+1)}\over
7.4 }
$$
where $B_{LJ}$ is the Lennard Jones  stability constant. The same calculation  using instead the bounds of Theorem \ref{teo1} yields
for the same convergence radius a lower bound surely smaller  than
$$
 {e^{-({B_{_{\rm LJ}}}+1)}\over
8.08}
$$
The results of this paper, as well as those given in \cite{Ba2, dLP, MP, dLPY}, show that direct methods based on
tree-graph identities are, in the end, much more effective than the old indirect methods of \cite{Pe63,Ru63} based on K-S equations.
On the other hand, the fact that the bound  obtained in \cite{dLPY}  beats  our bound of Theorem \ref{teo1} for some
particular cases of stable and tempered pair potentials
raises the question of whether it is possible to further improve the bounds of Theorem \ref{teo1} in the general case.

\section*{Acknowledgments}

\\ A.P.  has been partially supported by the Brazilian  agencies CNPq
(Conselho Nacional de Desenvolvimento Cient\'{\i}fico e Tecnol\'ogico - Bolsa de Produtividade em pesquisa, grant n. 306208/2014-8)
and  FAPEMIG (Funda{\c{c}}\~ao de Amparo \`a  Pesquisa do Estado de Minas Gerais - Programa de Pesquisador Mineiro, grant n. 00230/14).
 S.Y. has been partially supported by the Argentine agency
CONICET (Consejo Nacional de Investigaciones Cient\'\i ficas y T\'ecnicas) and by
the mathematics department of UFMG.

%


\begin{thebibliography}{99}











\bibitem{Ba1}  A. G. Basuev (1978) :  {\it A theorem on minimal specific energy for classical systems}. Teoret. Mat. Fiz.
{\bf 37}, no. 1, 130--134.
\bibitem{Ba2}   A. G. Basuev (1979): {\it  Representation for the Ursell functions, and cluster estimates}. Teoret. Mat. Fiz. {\bf 39}, no. 1, 94-105.

%




\bibitem{BM}  D. Brydges, Ph. A. Martin (1999): {\it Coulomb Systems at Low Density: A Review},  J. Statist. Phys. {\bf 96}, 1163-1330.




\bibitem{Ca} A. Cayley (1889): {\it A Theorem on trees}.  Quarterly Journal of Pure and Applied Mathematics, {\bf 23}, 376--378.


\bibitem{FP}  R. Fern\'andez and A. Procacci (2007): {\it Cluster expansion for abstract polymer models.New bounds from an old approach}, Commun. Math Phys, {\bf 274}, 123--140.






\bibitem{Ga} G.  Gallavotti (1999): {\it  Statistical mechanics. A short treatise},  Springer, 1999.

\bibitem{Gr}  J. Groeneveld (1967):  {\it Estimation methods for Mayer graphical expansions}, Doctor's thesis
published in Proceedings of the Koninklijke Nederlandse Akademie vanWetenschappen,
Series 70, Nrs. 4 and 5, 451-507.

\bibitem{LP}  J. L. Lebowitz and O. Penrose (1964): {\it Convergence of Virial Expansions}, J. Math. Phys. {\bf 7}, 841-847.



%



\bibitem{JI} J. E. Jones; A. E. Ingham (1925): {\it On the calculation of certain crystal potential constants, and on the cubic
crystal of least potential energy}. Proc. Roy. Soc. Lond. A {\bf 107}, 636--653.





\bibitem{dLP}  B.N.B. de Lima and A. Procacci (2014): {\it The Mayer series of the Lennard-Jones gas: improved bounds for the convergence radius}, ,  J. Stat. Phys., {\bf 157}, n.3, 422-435.


\bibitem{dLPY}  B.N.B. de Lima,  A. Procacci and S. A. Yuhjtman (2016): {\it On stable pair potentials with an attractive tail, remarks on two papers by A. G.
Basuev}.  Comm. Math. Phys., {\bf 343}, 445--476.










%

\bibitem{MM} J. E. Mayer and M. G. Mayer (1940): {\it Statistical Mechanics},  Wiley, New York, 1940.



\bibitem{MP}  T. Morais and  A. Procacci (2014): {\it Continuous particles in the Canonical Ensemble as an abstract polymer gas}
Journal of Statistical Physics,
J. Stat. Phys., {\bf 151}, 830-845.

\bibitem{MPS} T. Morais, A. Procacci and B. Scoppola (2014): {\it On Lennard-Jones type potentials and hard-core potentials with an attractive tail}, J. Stat. Phys., {\bf 157} , p. 17-39.



\bibitem{Pe63} O. Penrose (1963): {\it Convergence of Fugacity Expansions for Fluids and Lattice Gases}, J. Math. Phys. {\bf 4},  1312 (9 pages).


 \bibitem{Pe67} O. Penrose (1967): {\it Convergence of fugacity
 expansions for classical systems}.  In {\it Statistical
 mechanics: foundations and applications}\/, A. Bak (ed.),
 Benjamin, New York.




\bibitem{errata} 
A. Procacci (2009): {\it Erratum and Addendum:``Abstract Polymer Models
with General Pair Interactions"}, J. Stat. Phys.,  {\bf 135}, 779--786.








\bibitem{Ru}  D. Ruelle (1969): {\it Statistical mechanics: Rigorous
    results}\/. W. A. Benjamin, Inc., New York-Amsterdam.

\bibitem{Ru63} D. Ruelle (1963): {\it Correlation functions of classical gases}, Ann. Phys., {\bf 5}, 109--120.


\bibitem{So1}
A. D. Sokal (2001): {\it Bounds on the complex zeros of (di)chromatic polynomials
and Potts-model partition
functions}. Combin. Probab. Comput., {\bf 10}, no. 1, 41--77.




\bibitem{Y} S. A. Yuhjtman (2015): {\it A sensible estimate for the stability constant of the Lennard-Jones potential},
J. Stat. Phys. {\bf 160}, no. 6, 1684--1695.






\end{thebibliography}
\end{document}